\newtheorem{thm}{Theorem}
\theoremstyle{definition}
\title{A remark on conditional entropy}
\author{Adam Wang}
\date{Mar 2024}
\begin{document}

\maketitle

\abstract{The following note proves that conditional entropy of a sequence is almost time-reversal invariant, specifically they only differ by a small constant factor dependent only upon the forward and backward models that the entropies are being calculated with respect to. This gives rise to a numerical value that quantifies learnability, as well as a methodology to control for distributional shift between datasets. Rough guidelines are given for practitioners.}

\section{Introduction}
Entropy and its variants are widely utilized metrics within information theory and machine learning. Many theorems and applications have been found, with most of them stemming from the astounding fact that this quantity is minimized exactly when two distributions are equivalent \cite{mackay_information_2019}.

Conditional entropy is one variant that has found usage in many different fields \cite{porta_conditional_1999}\cite{graham_using_2013}, such as the theoretical underpinnings of context modelling in file compression, though it is just referred to as entropy in this context.  

The following manuscript notes that conditional entropy of sequential datasets is almost invariant to forward and backward passes, akin to a time-reversal invariance. This gives rise to a way of quantifying learnability. The manuscript will focus on the discrete case. 

If one chooses to model a sequential dataset $S$ from an alphabet of finitely many symbols $\mathcal X$ with a predictive model of a fixed length context of $n$, one is trying to find a conditional probability distribution, which in proxy assumes such a distribution exists:
\begin{align*}
    p(x_{n}=s_n|x_0=s_0,\cdots,x_{n-1}=s_{n-1})
\end{align*}
Thus, we assume that our dataset is generated by some process that follows this fixed probability distribution. If one further makes the assumption that for our process there is a well defined probability distribution for sequences of length $n$:
\begin{align*}
    p(x_0=s_0,\cdots,x_{n-1}=s_{n-1})=p(s_0,\cdots,s_{n-1})
\end{align*}
Then, we can always find the "reverse" conditional probabilities to predict the previous symbol given the latter:
\begin{align*}
    p(x_0=s_0|x_1=s_1,&\cdots,x_{n}=s_{n}) = \\
    &\frac{p(x_{n}=s_n|x_0=s_0,\cdots,x_{n-1}=s_{n-1})p(s_0,\cdots,s_{n-1})}{p(s_1,\cdots,s_{n})}
\end{align*}
Thus given a sequence of symbols $S$ that follows the above process, we can define its reverse to be $\hat S$. Thus $\hat S$ must follow the distribution $\hat p$ given by:
\begin{align*}
    \hat p(x_{n}=s_n|x_0=s_0,\cdots,x_{n-1}=s_{n-1}) = p(x_{0}=s_n|x_{1}=s_{n-1},\cdots,x_{n}=s_{0})
\end{align*}
Henceforth, unambiguously we will write conditional $p$ to be predicting the next symbol given the previous $n$ in $S$ and $\hat p$ to be predicting the previous symbol given the next $n$ in $S$. These notions are reversed in $\hat S$. We also have that:
\begin{align*}
    p(s_0,\cdots,s_{n-1}) = \hat p(s_{n-1},\cdots,s_0)
\end{align*}
Next, we define the following symbol:
\begin{align*}
    \#_{S,n+1}(s_0\cdots s_n)
\end{align*}
to mean the number of occurrences of the $n+1$-tuple within the dataset $S$. If the tuple is shorter, for example:
\begin{align*}
    \#_{S,n+1}(s_0\cdots s_{n-1})
\end{align*}
This will denote the number of occurrences of $n+1$-tuples that start with $n$-tuple within the dataset.

Armed with these gadgets, we are now ready to prove the main theorem. 

\begin{thm}
    For a sequential dataset $S$ of length $N$ generated by some process with well defined conditional and unconditional distribution, $p$, the difference between forward and backward conditional entropy are given by:
    \begin{align*}
        H_p(S)-H_{\hat p}(\hat S)&=\log(p(\vec x_f))-\log(p(\vec x_l))\\
        &\leq C
    \end{align*}
    Where $\vec x_f, \vec x_l$ are the first and last $n$-tuples of $S$ and $C$ is a constant dependent only upon $p$. In other words, the difference in average conditional entropy is $\mathcal O(1/N)$
\end{thm}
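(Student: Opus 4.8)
The plan is to recognize both conditional entropies as telescoping sums of surprisals and to read off the claimed difference as a pair of boundary terms. First I would unfold the definition of $H_p(S)$: because weighting each $(n+1)$-tuple by its count $\#_{S,n+1}$ is the same as sliding a length-$(n+1)$ window across $S$, the forward conditional entropy is
\begin{align*}
    H_p(S) = -\sum_{i=n}^{N-1}\log p(s_i\mid s_{i-n},\ldots,s_{i-1}).
\end{align*}
The order-$n$ Markov factorization of the joint then gives $p(s_0,\ldots,s_{N-1}) = p(\vec x_f)\prod_{i=n}^{N-1}p(s_i\mid s_{i-n},\ldots,s_{i-1})$, so taking $-\log$ collapses the sum to
\begin{align*}
    H_p(S) = -\log p(s_0,\ldots,s_{N-1}) + \log p(\vec x_f).
\end{align*}

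The second step is to run the identical computation in reverse. Since a forward pass through $\hat S$ under $\hat p$ is exactly a backward pass through $S$, I would write $H_{\hat p}(\hat S) = -\sum_{i=0}^{N-1-n}\log\hat p(s_i\mid s_{i+1},\ldots,s_{i+n})$. Applying the reverse chain rule to the same joint and peeling off the final $n$-tuple yields $p(s_0,\ldots,s_{N-1}) = p(\vec x_l)\prod_{i=0}^{N-1-n}\hat p(s_i\mid s_{i+1},\ldots,s_{i+n})$, where each reverse conditional is precisely the Bayes inversion recorded earlier in the excerpt. Hence
\begin{align*}
    H_{\hat p}(\hat S) = -\log p(s_0,\ldots,s_{N-1}) + \log p(\vec x_l).
\end{align*}
Subtracting the two telescoped expressions cancels the common term $-\log p(s_0,\ldots,s_{N-1})$ and leaves exactly $H_p(S)-H_{\hat p}(\hat S) = \log p(\vec x_f) - \log p(\vec x_l)$.

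For the bound I would use finiteness of the alphabet: there are only finitely many $n$-tuples, so every $\log p(\cdot)$ lies in the interval $[\log p_{\min},0]$, where $p_{\min}$ is the least positive probability that $p$ assigns to an $n$-tuple. The difference is therefore at most $C := -\log p_{\min}$, a quantity depending on $p$ alone. Since this bounds the difference of the \emph{summed} surprisals, dividing through by the sequence length $N$ shows the per-symbol averages differ by $C/N = \mathcal O(1/N)$.

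The step I expect to be the genuine obstacle is the backward factorization: I must verify that peeling symbols off the far end of the joint deposits the boundary mass on the \emph{last} $n$-tuple $\vec x_l$ rather than the first, and that the conditional $p(s_i\mid s_{i+1},\ldots,s_{N-1})$ produced by the reverse chain rule really collapses to the windowed reverse conditional $\hat p(s_i\mid s_{i+1},\ldots,s_{i+n})$. This collapse is exactly the assertion that the time-reversal of an order-$n$ Markov process is again order-$n$ Markov; I would either invoke this as a standard fact or, to stay self-contained, substitute the Bayes identity for each $\hat p$ factor and check that the window marginals telescope. Everything downstream — the cancellation and the constant bound — is then routine index bookkeeping.
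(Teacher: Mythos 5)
Your proof is correct, but it takes a genuinely different route from the paper's. The paper never touches the joint distribution of the whole sequence: it groups the $(n+1)$-windows by tuple type, splits each surprisal as $-\log p(s_0,\ldots,s_n)+\log p(s_0,\ldots,s_{n-1})$, shows the joint parts of the forward and backward entropies coincide exactly under reversal (since $\hat p(s_0,\ldots,s_n)=p(s_n,\ldots,s_0)$ and window counts in $\hat S$ are reversed window counts in $S$), and then matches the context parts by a counting argument: an $n$-tuple occurs as the prefix of an $(n+1)$-window once per occurrence in $S$ except for the final one, $\vec x_l$, and symmetrically $\vec x_f$ on the reversed side. That argument is pure bookkeeping on counts; it holds for an arbitrary sequence $S$ scored by any tuple-consistent $p$ and its reversal $\hat p$, with no appeal to $S$ being an actual sample, to a length-$N$ joint, or to Markov structure. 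Your route instead telescopes each entropy into $-\log p(s_0,\ldots,s_{N-1})$ plus a boundary term and cancels the common likelihood; this is more conceptual, makes it transparent why exactly two boundary terms survive, and you supply an explicit constant $C=-\log p_{\min}$ where the paper leaves $C$ implicit. The price is the extra structure you invoke: the forward factorization needs the length-$N$ joint to exist and be order-$n$ Markov with time-homogeneous conditionals, and the backward factorization needs still more --- the paper's $\hat p$ is built from position-independent tuple marginals, so identifying it with the true reverse conditionals of the process (and identifying the final boundary term with $p(\vec x_l)$) requires stationarity of the tuple marginals, not merely the reverse-Markov property you cite as a standard fact. You correctly flag this as the crux, and your fallback does close the gap: substituting the Bayes identity into the backward product, one checks that
\begin{align*}
    p(\vec x_l)\prod_{i=0}^{N-1-n}\hat p(s_i\mid s_{i+1},\ldots,s_{i+n})
    = p(\vec x_f)\prod_{i=n}^{N-1}p(s_i\mid s_{i-n},\ldots,s_{i-1})
\end{align*}
holds algebraically, using only $p(s_0,\ldots,s_n)=p(s_n\mid s_0,\ldots,s_{n-1})\,p(s_0,\ldots,s_{n-1})$ and the definition of $\hat p$, with no stationarity assumption. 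Carried out that way your proof assumes nothing beyond the paper's setup, though the verification amounts to the same cancellations the paper performs with counts, arranged along the sequence rather than over tuple types.
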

\begin{proof}
    Consider:
    \begin{align*}
        H_p(S) &= -\sum_{(x_0\cdots x_n)\in S} \log(p(x_n|x_0,\cdots,x_{n-1})) \\
        &= -\sum_{(s_0\cdots s_n)\in \mathcal X^{n+1}} \#_{S,n+1}(s_0\cdots s_n) \log(p(s_n|s_0,\cdots,s_{n-1})) \\
        &= -\sum_{(s_0\cdots s_n)\in \mathcal X^{n+1}} \#_{S,n+1}(s_0\cdots s_n) (\log(p(s_0,\cdots,s_n)) \\
        &\hspace{15em}- \log(p(s_0,\cdots,s_{n-1})))\\
        &= H_1 + H_2
    \end{align*}
    One can repeat this step for the reverse entropy:
    \begin{align*}
        H_{\hat p}(\hat S) &= -\sum_{(x_0\cdots x_n)\in \hat S} \log(\hat p(x_n|x_0,\cdots,x_{n-1})) \\
        &= -\sum_{(s_0\cdots s_n)\in \mathcal X^{n+1}} \#_{\hat S,n+1}(s_0\cdots s_n) (\log(\hat p(s_0,\cdots,s_n)) \\
        &\hspace{15em} - \log(\hat p(s_0,\cdots,s_{n-1})))\\
        &= H'_1 + H'_2
    \end{align*}
    Notice that by relabelling we get that:
    \begin{align*}
        H'_1&=-\sum_{(s_0\cdots s_n)\in \mathcal X^{n+1}} \#_{\hat S,n+1}(s_0\cdots s_n) \log(\hat p(s_0,\cdots,s_n))\\
        &= -\sum_{(s_0\cdots s_n)\in \mathcal X^{n+1}} \#_{S,n+1}(s_n\cdots s_0) \log(p(s_n,\cdots,s_0))\\
        &= H_1
    \end{align*}
    Now let us consider $H_2$:
    \begin{align*}
       H_2 &= \sum_{(s_0\cdots s_n)\in \mathcal X^{n+1}} \#_{S,n+1}(s_0\cdots s_n) \log(p(s_0,\cdots,s_{n-1})) \\
        &=\sum_{(s_0\cdots s_{n-1})\in \mathcal X^{n}} \log(p(s_0,\cdots,s_{n-1})) \sum_{s_n\in \mathcal X}\#_{S,n+1}(s_0\cdots s_n)\\
        &=\sum_{(s_0\cdots s_{n-1})\in \mathcal X^{n}} \#_{S,n+1}(s_0\cdots s_{n-1})\log(p(s_0,\cdots,s_{n-1}))
    \end{align*}
    Notice that $\#_{S,n+1}(s_0\cdots s_{n-1})$ is the number of $n+1$-tuples that start with the $n$-tuple inside of $S$. However, this does not include the last $n$-tuple, $\vec x_l$, since there is no $n+1$-tuple in $S$ that starts with it. Similarly, if one does the same steps for $H_2'$, they find that they will be missing the first $n$-tuple, $\vec x_f$, since there is no $n+1$-tuple in $S$ that ends with that. Thus, rearranging and adding zero, we get:
    \begin{align*}
        H_2 &= \sum_{(s_0\cdots s_{n-1})\in \mathcal X^{n}} \#_{\hat S,n+1}(s_{n-1}\cdots s_0)\log(\hat p(s_{n-1},\cdots,s_0)) \\
        &\hspace{5em}+ \log(p(\vec x_f)) - \log(p(\vec x_l)) \\\\
        &= H'_2 + \log(p(\vec x_f)) - \log(p(\vec x_l))
    \end{align*}
\end{proof}
It doesn't appear possible to slightly adjust the quantity so that they are precisely equal. If one chose to remove the first observation for $S$ and $\hat S$ respectively, then the quantities would differ by a similar factor but based on $n+1$-tuples. The author has also numerically verified the differing constant on random subsets of Enwik9.

It should be possible to extend the main result to allow for continuous variables and time. These would permit study into more diverse data sources. It would also be interesting if there were some similar notion for non-sequential datasets.  

Since the choice of conditional $p$ was not special, we could have easily replaced it with two models $M$ and $\hat M$ trained on forward and backward passes. This is a rather surprising fact, and tells us that in theory compressing a file forwards and backwards should yield the same results \cite{shannon_mathematical_1948}. 

However, for the proof to work, one would have to assume that the training process is symmetric, in the sense that if one trains two models in the forwards and backwards directions, then the following equality will hold:
\begin{align*}
    M(s_n|s_0,\cdots,s_{n-1})p(s_0,\cdots,s_{n-1}) &= \hat M(s_0|s_n,\cdots,s_1)\hat p(s_{n},\cdots,s_1)
\end{align*}
We should expect this to be the case if one controls for the architecture and training methodology. However, if there is a large discrepancy, this would possibly imply that certain features are easier for the model to learn in one direction. Therefore, this gives rise not only for a way to control for the process that generates the dataset when testing learnability hypotheses, but also to quantify learnability itself. 

For practitioners, if one trains two identical models forwards and backwards on a large dataset, then computes the average cross entropy loss, the difference between the two quantities measures how much easier it is for one direction to learn:
\begin{align*}
    \Delta H = \frac{1}{N}(H_M(S)-H_{\hat M}(\hat S))
\end{align*}
If $\Delta H$ is positive, this implies the reverse direction is easier, and vice versa for negative. The absolute value measures how large of a difference this is and the usual interpretations of KL-divergence should transfer. 

If one has a hypothesis about which properties of a dataset make it easier for a model or training process to generalize or memorize \cite{zhang_understanding_2021}, then they could test this by constructing synthetic datasets that satisfy the property in only one direction. To remove noise, one may need to train $M$ and $\hat M$ multiple times and generate multiple datasets. A table of $\Delta H$ for various models and datasets may be useful to guide researchers.

Further, if $\Delta H$ is close to 0, but the above equality relating $M$ and $\hat M$ through $p$ fails badly, then that means the two models have learnt two different sets of features that perform at a similar level. Features that are different correspond exactly to the tuples which the equality breaks. Thus, this gives a way to determine whether the models are learning the same things. However, this is contingent on a good estimate of the unconditional probabilities being available. 
\section{Acknowledgements}
The author is grateful for Johns Hopkins' generous support during his time as an undergraduate. 

\printbibliography
\end{document}